\def\N{\mathbbm N}
\def\Q{\mathbbm Q}
\def\R{\mathbbm R}
\def\C{\mathbbm C}
\def\Pf{\operatorname{Pf}}
\renewcommand{\leq}{\leqslant}
\renewcommand{\geq}{\geqslant}
\newtheorem{theorem}{Theorem}
\newtheorem{conjecture}[theorem]{Conjecture}
\begin{document}

\title[Advanced Computer Algebra for Determinants]{Advanced Computer Algebra for Determinants$\!$}

\author{Christoph Koutschan}
\address{%
Research Institute for Symbolic Computation (RISC)\\
Johannes Kepler University\\
Altenberger Stra\ss e 69\\
A-4040 Linz\\
Austria}
\email{koutschan@risc.jku.at}

\author{Thotsaporn~``Aek''~Thanatipanonda}
\address{%
Research Institute for Symbolic Computation (RISC)\\
Johannes Kepler University\\
Altenberger Stra\ss e 69\\
A-4040 Linz\\
Austria}
\email{thotsaporn@gmail.com}

\begin{abstract}
We prove three conjectures concerning the evaluation of determinants,
which are related to the counting of plane partitions and rhombus
tilings.  One of them was posed by George Andrews in 1980, the
other two were by Guoce Xin and Christian Krattenthaler. Our proofs
employ computer algebra methods, namely, the holonomic ansatz proposed
by Doron Zeilberger and variations thereof. These variations make
Zeilberger's original approach even more powerful and allow for
addressing a wider variety of determinants.  Finally, we
present, as a challenge problem, a conjecture about a closed-form
evaluation of Andrews's determinant.
\end{abstract}

\thanks{CK was supported by the Austrian Science Fund (FWF): P20162-N18,
and in part by the grant DMU 03/17 of the Bulgarian National Science Fund.
TT was supported by the strategic program ``Innovatives O\"{O} 2010plus'' by the Upper Austrian Government.\\
The final publication is available at www.link.springer.com
(Annals of Combinatorics, DOI~10.1007/s00026-013-0183-8).}

\subjclass{Primary 33F10; Secondary 15A15, 05B45}

\keywords{determinant, computer algebra, holonomic ansatz, rhombus tiling}

\maketitle

\section{Introduction}

The concept of determinants evolved as early as 1545 when Girolamo
Cardano tried to solve systems of linear equations.  The mathematics
community slowly realized the importance of determinants; we had to
wait for more than 200 years before someone formally defined the term
``determinant''.  It was first introduced by Carl Friedrich Gau\ss\ in
his \emph{Disquisitiones Arithmeticae} in 1801. Determinants possess
many nice properties and formulas such as multiplicativity, invariance
under row operations, Cramer's rule, etc. Every student nowadays
learns how to compute the determinant of a specific given matrix, say,
with fixed dimension and containing numeric quantities as entries. On
the other hand, there are lots of matrices with symbolic entries that
have a nice closed-form formula for arbitrary dimension. The first
example in the history of mathematics and still the most prominent one
is the Vandermonde matrix.

Starting in the mid 1970's, the importance of evaluating determinants
became even more significant when people related counting problems
from combinatorics to the evaluation of certain determinants.
Evaluating determinants of matrices whose dimension varies according
to a parameter is a delicate problem but at the same time a very
important one.  Around the same time the field of computer algebra
emerged.  Doron Zeilberger was amongst the first mathematicians to
realize the importance of computer algebra algorithms for
combinatorial problems, special function identities, symbolic
summation, and many more. In his paper~\cite{Zeilberger07} he built
the bridge between these two topics, namely, symbolic determinant
evaluation and computer algebra, and since then his \emph{holonomic
  ansatz} has been successfully applied to many problems related to
the evaluation of determinants. The most prominent one is probably a
notorious conjecture from enumerative combinatorics, the so-called
$q$-TSPP conjecture, which was the only remaining open problem from
the famous list~\cite{Stanley86} by Richard Stanley (it also appeared
in~\cite{Krattenthaler99}) until it was recently
proved~\cite{KoutschanKauersZeilberger11} using Zeilberger's holonomic
ansatz.

In this paper, we solve some of the problems that are listed in
Christian Krattenthaler's complement~\cite{Krattenthaler05} to his
celebrated essay~\cite{Krattenthaler99} (the attentive reader may
already have observed that our title is an allusion to this
reference).  At the same time we show that the holonomic ansatz can be
modified in various ways in order to apply it to particular problems
that could not be addressed with the original method.

\section{Zeilberger's Holonomic Ansatz}
\label{sec.Zeil}

For sake of self-containedness we recall briefly the original
holonomic ansatz for determinant evaluations as it was proposed by
Zeilberger~\cite{Zeilberger07}. Its steps are completely automatic and
produce a rigorous proof---provided that they can be successfully
carried out in the example at hand. In particular, the approach relies
on the existence of a ``nice'' description for an auxiliary function
(it appears as $c_{n,j}$ below); if such a description does not exist
then the holonomic ansatz fails.  That's why we call it an
``approach'' or an ``ansatz'', rather than an algorithm.

Generally speaking, Zeilberger's holonomic ansatz addresses
determinant evaluations of the type
\[
  \det A_n = \det_{1 \leq i,j \leq n}(a_{i,j})=b_n\qquad(n\geq 1)
\]
where the entries $a_{i,j}$ of the $n\times n$ matrix~$A_n$ and the
(conjectured) evaluation~$b_n$ (where $b_n\neq 0$ is required for all
$n\geq 1$) are explicitly given.  The underlying principle is an
induction argument on~$n$. The base case $a_{1,1}=b_1$ is easily
checked.  Now assume that the determinant evaluation has been proven
for $n-1$. In particular, it follows that $\det A_{n-1}$ is nonzero by
the general assumption on~$b_n$. Hence the rows of $A_{n-1}$ are
linearly independent and thus the linear system
\[
  \begin{pmatrix}
    a_{1,1} & \cdots & a_{1,n-1} & a_{1,n} \\
    \vdots & \ddots & \vdots & \vdots \\
    a_{n-1,1} & \cdots & a_{n-1,n-1} & a_{n-1,n} \\
    0 & \cdots & 0 & 1
  \end{pmatrix}
  \begin{pmatrix}
    c_{n,1} \\
    \vdots \\
    c_{n,n-1} \\
    c_{n,n}
  \end{pmatrix}
  =
  \begin{pmatrix}
    0 \\
    \vdots\\
    0\\
    1
  \end{pmatrix}
\]
has a unique solution. In the well-known Laplace expansion formula
(here with respect to the last row)
\[
  \det_{1 \leq i,j \leq n}(a_{i,j}) = \sum_{j=1}^n (-1)^{n+j}M_{n,j}a_{n,j}
\]
the expression $(-1)^{n+j}M_{n,j}$ is called the $(n,j)$-cofactor
of~$A_n$.  The minor $M_{n,j}$ is the determinant of the matrix
obtained by removing the $n$-th row and the $j$-th column. The above
linear system has been constructed in such a way that the
entry~$c_{n,j}$ in its solution is precisely the $(n,j)$-cofactor
of~$A_n$ divided by its $(n,n)$-cofactor (which is just the
$(n-1)$-determinant). This fact can easily be seen by Cramer's rule,
i.e., by considering the matrix $A_n^{(i)}$ that we obtain from $A_n$
by replacing the last row by the $i$-th row. For $1\leq i<n$, the
Laplace expansion of $A_n^{(i)}$ corresponds exactly to the $i$-th
equation in our linear system. Now the determinant of $A_n$ is given
by
\[
  b_{n-1}\sum_{j=1}^n c_{n,j} a_{n,j}.
\]
To complete the induction step it remains to show that this expression
is equal to~$b_n$.

The problem is that we cannot expect to obtain a closed-form
expression for the quantity~$c_{n,j}$ (otherwise, we certainly would be
able to derive a closed form for the determinant and we were done).
Instead, we will guess an implicit, recursive definition for a
bivariate sequence~$c_{n,j}$ and then prove that it satisfies
\begin{alignat}3
  c_{n,n}&=1&\qquad&(n\geq1),\label{eq.1}\\
  \sum_{j=1}^n c_{n,j}a_{i,j}&=0&&(1\leq i<n),\label{eq.2}\\
  \sum_{j=1}^n c_{n,j}a_{n,j}&=\frac{b_n}{b_{n-1}}&&(n\geq1)\label{eq.3}.
\end{alignat}
From the first two identities which correspond to the linear system
given above, it follows that our guessed~$c_{n,j}$ is indeed the
normalized $(n,j)$-cofactor. Identity~\eqref{eq.3} then certifies
that the determinant evaluates to~$b_n$. Hence the sequence~$c_{n,j}$
plays the r\^{o}le of a certificate for the determinant evaluation.

Now what kind of implicit definition for~$c_{n,j}$ could we think of?
Of course, there was a good reason for Zeilberger to name his approach
the ``holonomic ansatz''. That is to say, because he had the class of
holonomic functions (or better, sequences) in mind when he formulated
his approach. In short, this class consists of multi-dimensional
sequences that satisfy linear recurrence relations with polynomial
coefficients, such that the sequence is uniquely determined by
specifying \emph{finitely many} initial values (we omit some
additional technical conditions here). What makes the use of this
class of sequences convenient is the fact that it is not only closed
under the basic arithmetic operations (addition, multiplication), but
also under more advanced operations such as specialization,
diagonalization, and definite summation; all these are needed to prove
\eqref{eq.1}, \eqref{eq.2}, and \eqref{eq.3}. Moreover, there exist
known algorithms for performing zero tests and the previously
mentioned operations.  For more details on holonomic functions and
related algorithms see, e.g.,
\cite{Zeilberger90,Chyzak98,Koutschan09}). Thus if the matrix
entries~$a_{i,j}$ are holonomic and if luckily the auxiliary function
$c_{n,j}$ turns out to be holonomic, then the approach will succeed to
produce a recurrence for the quotient $b_n/b_{n-1}$.

\section{An Old Problem by George Andrews}
\label{sec.34}

In the context of enumerating certain classes of plane partitions,
namely, cyclically symmetric ones and descending ones, George
Andrews~\cite{Andrews80a} encountered an intriguing determinant which
he posed as a challenging problem. In Krattenthaler's
survey~\cite{Krattenthaler05}, it appears as \emph{Problem 34}.  This
conjecture does not even deal with a closed-form evaluation, but it is
``only'' about the quotient of two consecutive determinants; a
situation that strongly suggests to employ Zeilberger's holonomic
ansatz!

\begin{theorem}\label{thm.34}
Let the determinant~$D_1(n)$ be defined by
\[
  D_1(n) := \det_{1\leq i,j\leq n}\left(\delta_{i,j} + \binom{\mu+i+j-2}{j}\right)
\]
where~$\mu$ is an indeterminate and $\delta_{i,j}$
is the Kronecker delta function. Then the following relation holds:
\[
  \frac{D_1(2n)}{D_1(2n-1)} =
   (-1)^{(n-1)(n-2)/2}2^n\frac{
     \left(\frac12(\mu+2n)\right)_{\lfloor (n+1)/2\rfloor}\left(\frac12(\mu+4n+1)\right)_{n-1}}{
     (n)_n \left(\frac12(-\mu-4n+3)\right)_{\lfloor (n-1)/2\rfloor}}.
\]
\end{theorem}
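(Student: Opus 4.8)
The plan is to run Zeilberger's holonomic ansatz from Section~\ref{sec.Zeil} with matrix entries $a_{i,j}=\delta_{i,j}+\binom{\mu+i+j-2}{j}$ and $b_n:=D_1(n)$ (here we do \emph{not} conjecture a closed form for $b_n$ itself, only for the quotient), exploiting that the ansatz is built to certify exactly a formula for $b_n/b_{n-1}$ --- and $D_1(2n)/D_1(2n-1)$ is just the instance $n\mapsto 2n$ of such a quotient. The hypothesis ``$b_n\neq0$'' has to be secured first: $D_1(n)$ is a polynomial in $\mu$, and it is not the zero polynomial (for instance, its leading coefficient with respect to $\mu$ admits a closed product form that is visibly non-zero). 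We may therefore argue over the field $\Q(\mu)$.

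Next I would guess the certificate $c_{n,j}$. Computing $D_1(n)$ symbolically for $n$ up to a few dozen and solving the linear systems encoded by \eqref{eq.1} and \eqref{eq.2} produces the normalized cofactors $c_{n,j}\in\Q(\mu)$; from these data one searches for linear recurrences in the shifts $S_n,S_j$ annihilating $c_{n,j}$. A plain holonomic description is not expected to exist here: the floor functions $\lfloor(n\pm1)/2\rfloor$ and the $4$-periodic sign $(-1)^{(n-1)(n-2)/2}$ in the claimed evaluation are the tell-tale signs that $c_{n,j}$ decomposes, according to the parities of $n$ and $j$, into finitely many interleaved subsequences, each of which \emph{is} holonomic. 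Treating such interleaved systems is precisely one of the variations of the ansatz announced in the Introduction, and this is the form of the guessed description I would work with.

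With a (conjectural) description of $c_{n,j}$ in hand, the three certifications proceed as follows. Identity \eqref{eq.1}, $c_{n,n}=1$, is the diagonal specialization of that description and is checked from a recurrence in $n$ together with finitely many initial values. For \eqref{eq.2} I would first use the Kronecker delta to write, for $1\le i\le n$,
\[
  \sum_{j=1}^n c_{n,j}a_{i,j}=c_{n,i}+\sum_{j=1}^n c_{n,j}\binom{\mu+i+j-2}{j},
\]
so that \eqref{eq.2} becomes the summation identity $\sum_{j=1}^n c_{n,j}\binom{\mu+i+j-2}{j}=-c_{n,i}$ for $1\le i<n$; this I would prove by creative telescoping --- compute an annihilating system for the summand in the variables $n$, $i$, $j$ (with $\mu$ a parameter), sum over $j$ with careful treatment of the two summation endpoints (ideally a natural-boundary argument, using $c_{n,n}=1$), obtain from it a recurrence in $(n,i)$ for the left-hand side, and match this recurrence and enough initial values with $-c_{n,i}$. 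Identity \eqref{eq.3}, once $c_{n,n}=1$ is known, reads $1+\sum_{j=1}^n c_{n,j}\binom{\mu+n+j-2}{j}=D_1(n)/D_1(n-1)$, and the same creative-telescoping machinery applied to this $j$-sum yields a polynomial recurrence for $\rho(n):=D_1(n)/D_1(n-1)$.

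It remains to pass from the recurrence for $\rho(n)$ to the closed form for $\rho(2n)=D_1(2n)/D_1(2n-1)$: the even-indexed subsequence of the holonomic sequence $\rho$ is again holonomic, so one derives a recurrence that $\rho(2n)$ satisfies and then verifies that the explicit right-hand side of Theorem~\ref{thm.34} --- a hypergeometric term in $n$ multiplied by the $2$- and $4$-periodic factors --- satisfies the \emph{same} recurrence and agrees with $\rho(2n)$ for sufficiently many small $n$ (the latter checked against directly computed determinants). The main obstacle is \eqref{eq.2}: it is a genuine three-index definite summation carrying the extra parameter $\mu$, the certificate $c_{n,j}$ is expected to be sizeable and to carry the parity splitting, and it is in finding its annihilating ideal, eliminating $j$, and controlling the boundary terms that the real effort --- and the only real risk of the ansatz failing --- lies. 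Everything else is comparatively routine once a correct and sufficiently small description of $c_{n,j}$ has been obtained.
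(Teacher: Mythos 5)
Your general framework (guess a certificate $c_{n,j}$, certify \eqref{eq.1}--\eqref{eq.3} via closure properties and creative telescoping) is the right family of methods, but the specific route you propose --- run the \emph{original} ansatz for all $n$, obtain a recurrence for $\rho(n)=D_1(n)/D_1(n-1)$, and only at the very end restrict to the even subsequence --- is precisely the route that fails here, and for a structural rather than merely computational reason. Your certification of \eqref{eq.3} would express $\rho(n)$ as $\sum_{j}c_{n,j}a_{n,j}$ with $c_{n,j}$ holonomic, or, under your parity-interleaving hedge, an interleaving of finitely many holonomic pieces; by closure under multiplication and definite summation this would force the odd-index quotient $D_1(2n+1)/D_1(2n)$ to be holonomic in $n$. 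But there is strong evidence (see Section~\ref{sec.chall}) that this quotient is \emph{not} holonomic: the ``ugly'' factors that separate $D_1(2n)$ from $D_1(2n+1)$ are, for growing $n$, irreducible polynomials in $\mu$ of growing degree with no visible P-recursive structure. Consistently, the attempt to guess recurrences for the full cofactor array $c_{n,j}$ (over all $n$) produces nothing usable --- the candidate recurrences are either enormous or nonexistent. So your plan is expected to die at the guessing stage, and even granting a guess, the passage to a recurrence for $\rho(n)$ valid for all $n$ cannot go through.

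The missing idea is to never introduce the odd-size cofactors at all: define $c_{n,j}$ for $1\le j\le 2n$ as the normalized cofactors of the $2n\times 2n$ matrix only, and certify the analogues \eqref{eq.1a}--\eqref{eq.3a} of the three identities for this restricted array. This restricted array \emph{is} holonomic (a Gr\"obner basis consisting of two recurrences suffices), and \eqref{eq.3a} certifies $D_1(2n)/D_1(2n-1)$ directly, without ever touching $\rho$ at odd arguments. Two further technical obstacles then arise that your sketch does not anticipate: the leading coefficients of the guessed recurrences both contain the factor $j-2n$, so the diagonal values $c_{n,2n}$ are unreachable from them and one must extract an additional ``diagonal'' recurrence relating $c_{n+2,j+4}$, $c_{n+1,j+2}$, $c_{n,j}$ from the generated ideal in order to prove $c_{n,2n}=1$; and the free parameter $\mu$ can make leading coefficients vanish for special values, which is handled by restricting $\mu$ to a ray $\mu>\mu_0$ and extending to all $\mu$ afterwards using that the determinant is a polynomial in $\mu$. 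Your remark that $D_1(n)\neq 0$ over $\Q(\mu)$ must be secured is well taken, and your treatment of \eqref{eq.2} by creative telescoping matches the paper once it is reindexed to even dimensions.
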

\begin{proof}
By looking at the first few evaluations of~$D_1(n)$ (for $1\leq
n\leq8$ they are explicitly displayed in~\cite{Krattenthaler05}), we
see that only the quotient $D_1(2n)/D_1(2n-1)$ is nice, but not
$D_1(2n+1)/D_1(2n)$.  The reason is the occurrence of irreducible
nonlinear factors that change every two steps, i.e., $D_1(2n)$ and
$D_1(2n-1)$ share the same ``ugly'' factor (thus their quotient is
nice), but in $D_1(2n+1)$ the ``ugly'' part will be different (and
therefore the quotient $D_1(2n+1)/D_1(2n)$ does not factor nicely).

We first tried Zeilberger's original approach on the
determinant~$D_1(n)$. But we didn't even succeed to guess the
recurrences for~$c_{n,j}$ as they either are extraordinarily large or
do not exist at all. Moreover, we have good reasons to believe that
the quotient $D_1(2n+1)/D_1(2n)$ is not holonomic (see
Section~\ref{sec.chall}), in which case we know a priori that the
approach cannot succeed.

Therefore we have to come up with a variation of Zeilberger's
approach, that pays attention to even~$n$ only. This means that we
consider the normalized cofactors $c_{n,j}$ only for matrices of even
size. The new identities, to be verified, will be
\begin{alignat}{3}
  c_{n,2n} & = 1 &\qquad& (n\geq1), \tag{1a}\label{eq.1a}\\
  \sum_{j=1}^{2n} c_{n,j}a_{i,j} & = 0 && (1\leq i<2n), \tag{2a}\label{eq.2a}\\
  \sum_{j=1}^{2n} c_{n,j}a_{2n,j} & = \frac{b_{2n}}{b_{2n-1}} && (n\geq1). \tag{3a}\label{eq.3a}
\end{alignat}

In order to come up with an appropriate guess for the yet unknown
function~$c_{n,j}$, we compute the normalized cofactors for all
even-size matrices up to dimension~$30$. This gives a $15\times 30$
array with values in $\Q(\mu)$ that is used for guessing linear
recurrences for~$c_{n,j}$.  For this step, Kauers's Mathematica
package \texttt{Guess} has been employed, see~\cite{Kauers09} for more
details. To give the reader an impression of what the output looks
like, we display the results here in truncated form: Whenever the
abbreviation $\langle k\ \text{terms}\rangle$ appears, it indicates
that this polynomial cannot be factored into smaller pieces, and that
for better readability it is not displayed in full size here.

\begin{equation}\label{eq.rec1}
\begin{split}
  & 2n (j+1) (2n-1) (2j+\mu) (j-2n) (j-2n+1) \\
  & \qquad \times (\mu+4n-5) (\mu+4n-3) (j+\mu+2n-1) c_{n,j} =\\
  & j (j+\mu-1) (2j+\mu-1) (j-2n+3) (\mu+4n-3)\\
  & \qquad \times (j^4+2 j^3 \mu + \dots\langle 24\ \text{terms}\rangle + 12) c_{n-1,j+1} - {}\\
  & (j+1) (j+\mu+2n-3) (2j^6 \mu+8j^6n+ \dots\langle 92\ \text{terms}\rangle
  -210\mu n) c_{n-1,j}
\end{split}
\end{equation}

\begin{equation}\label{eq.rec2}
\begin{split}
  & (j-1) (j+\mu-3) (j+\mu-2) (2j+\mu-4) (j-2n) (j+\mu+2n-1) c_{n,j} =\!\\
  & j (j+\mu-3) (4j^4+8j^3\mu + \dots\langle 26\ \text{terms}\rangle + 16) c_{n,j-1} - {}\\
  & j (j-1) (j+\mu-2) (2j+\mu-2) (j-2n-2) (j+\mu+2n-3) c_{n,j-2}
\end{split}
\end{equation}

When translated into operator notation, the two recurrences~\eqref{eq.rec1}
and~\eqref{eq.rec2} constitute a Gr\"ob\-ner basis of the left ideal
that they generate in the corresponding operator algebra. As a
consequence it follows that they are compatible; this means that
starting from some given initial values, these recurrences will always
produce the same value for a particular $c_{n_0,j_0}$, independently
from the order in which they are applied. The support of the
recurrences suggests that fixing the initial values $c_{1,1}=-\mu/2$
and $c_{1,2}=1$ is sufficient: \eqref{eq.rec2} would produce $c_{1,j}$
for all $j>2$, and then \eqref{eq.rec1} could be used to obtain the full
array of values $(c_{n,j})_{n,j\geq 1}$.

Unfortunately it is not that easy, since there are two disturbing
pheno\-mena. The first is the factor $j-2n$ that appears in both
leading coefficients. Hence for computing $c_{n,2n}$ none of the
recurrences is applicable and we are stuck. Note that the factor
$j-2n+1$ in the leading coefficient of~\eqref{eq.rec1} is not a
problem since we can use~\eqref{eq.rec2} instead. We overcome the
problem with $c_{n,2n}$ by finding a recurrence of the form
\begin{equation}\label{eq.recdiag}
  p_2\,c_{n+2,j+4}+p_1\,c_{n+1,j+2}+p_0\,c_{n,j}=0,\qquad p_0,p_1,p_2\in\Q[n,j,\mu]
\end{equation}
in the ideal generated by \eqref{eq.rec1} and~\eqref{eq.rec2}. The
coefficient~$p_2$ does not vanish for $j=2n$, and thus, together with
the additional initial value~$c_{2,4}=1$, the recurrence \eqref{eq.recdiag} allows to
compute the values~$c_{n,2n}$.  

The second unpleasant phenomenon is the free parameter~$\mu$ which can
cause the same effect for certain choices.  For example, if $\mu=-6$
then we cannot compute $c_{2,3}$ from the given initial values since
both leading coefficients vanish simultaneously by virtue of the
factors $j-2n+1$ and $j+\mu+2n-1$. We handle it by restricting the
parameter~$\mu$ to the real numbers $>\!\mu_0$ for a certain
$\mu_0\in\R$, and by showing that all our calculations are sound under
this assumption (in most cases we use $\mu>0$ or $\mu>2$). But since
the determinant for every $n\in\N$ is a polynomial in~$\mu$ we can
extend our result afterwards to all $\mu\in\C$. Alternatively, one can
argue that $\mu$ is a formal parameter and therefore expressions like
$\mu+6$ are not zero as they are considered as elements in the
polynomial ring $\C[\mu]$.

We have now prepared the stage for executing the main part of
Zeilberger's approach. Identity~\eqref{eq.1a} is easily shown with
the help of recurrence~\eqref{eq.recdiag}: Substituting $j\to 2n$
produces a recurrence for the entries~$c_{n,2n}$, which together with
the initial values implies that $c_{n,2n}=1$ for all~$n$.
Identities~\eqref{eq.2a} and \eqref{eq.3a} are proven automatically as
well, since they are standard applications of holonomic closure
properties and summation techniques (creative telescoping). For these
tasks we have used the first author's Mathematica package
\texttt{HolonomicFunctions}~\cite{Koutschan10b}. The interested reader
is invited to study our computations in detail, by downloading the
electronic supplementary material from our
webpage~\cite{KoutschanThanatipanonda12a}.
\end{proof}

\section{Interlude: The Double-Step Variant}
\label{sec.double}

In this section, we propose a variant of Zeilberger's holonomic ansatz,
described in Section~\ref{sec.Zeil}, that enlarges the class of
determinants which can be treated by this kind of ansatz. The
condition $b_n\neq0$ for all $n\geq1$ imposes already some
restriction. For example, when studying a Pfaffian~$\Pf(A)$ for some
skew-symmetric matrix~$A$, one could be tempted to apply Zeilberger's
approach to the determinant of~$A$; recall that
$\Pf(A)^2=\det(A)$. The problem then is that $\det(A)$ is zero
whenever the dimension of~$A$ is odd.  Hence one would like to study
the quotient $b_{n}/b_{n-2}$ instead of the forbidden $b_n/b_{n-1}$;
as in the previous section, $n$ has to be restricted to the even
integers.  This dilemma can be solved by modifying Zeilberger's ansatz
subject to the Laplace expansion of Pfaffians,
see~\cite{IshikawaKoutschan12}.

On the other hand there are determinants which do factor nicely for
even dimensions but not for odd ones. Also here, we expect the
quotient $b_n/b_{n-2}$ to be nice, whereas the expression
$b_n/b_{n-1}$ might not even satisfy a linear recurrence and hence
could not be handled by Zeilberger's holonomic ansatz at
all. Similarly when the closed form~$b_n$ is different for even and
odd~$n$: While here the original approach could probably work in
principle, one may not succeed because of the computational complexity
that is caused by studying the quotient $b_n/b_{n-1}$, which is
expected to be more complicated than $b_n/b_{n-2}$. See
Theorems~\ref{thm.35} and~\ref{thm.36} below for such examples, which
actually have motivated us to propose the following variant.

As we announced we now generalize Zeilberger's method in order to
produce a recurrence for the quotient of determinants whose dimensions
differ by two. As before, let $M=(a_{i,j})_{1\leq i,j\leq n}$ and let
$b_n$ denote the conjectured evaluation of~$\det(M)$, which for all
$n$ in question has to be nonzero.  We pull out of the hat two
discrete functions $c'_{n,j}$ and $c''_{n,j}$ and verify the following
identities:

\begin{equation}\tag{1b}\label{eq.1b}
  c'_{n,n-1} = c''_{n,n} = 1,\qquad c'_{n,n} = c''_{n,n-1} = 0,
\end{equation}
\begin{equation}\tag{2b}\label{eq.2b}
  \sum_{j=1}^n a_{i,j}c'_{n,j} = \sum_{j=1}^n a_{i,j}c''_{n,j} = 0\qquad (1\leq i\leq n-2),
\end{equation}
\begin{equation}\tag{3b}\label{eq.3b}
  \bigg(\sum_{j=1}^n a_{n-1,j}\,c'_{n,j}\bigg)\!\bigg(\sum_{j=1}^n a_{n,j}\,c''_{n,j}\bigg)
  -\bigg(\sum_{j=1}^n a_{n-1,j}\,c''_{n,j}\bigg)\!\bigg(\sum_{j=1}^n a_{n,j}\,c'_{n,j}\bigg)
  = \frac{b_n}{b_{n-2}}.
\end{equation}
Then the determinant evaluation follows as a consequence, using a
similar induction argument as in Section~\ref{sec.Zeil}.

Let's try to give the motivation for this approach which also explains
why it works. The idea is based on the formula for the determinant of
a block matrix:
\[
  \det(M) = \det\left(\!\!\begin{array}{cc}M_1 & M_2\\ M_3 & M_4\end{array}\!\!\right)=
  \det(M_1)\det(M_4-M_3M_1^{-1}M_2).
\]
We want to divide the matrix~$M$ into blocks such that $M_4$ is a
$2\times 2$ matrix.  Let $C=(C',C'')$ denote the $(n-2)\times 2$
matrix whose first column is $C'=(c'_{n,j})_{1\leq j\leq n-2}$ and
whose second column is $C''=(c''_{n,j})_{1\leq j\leq n-2}$. With this
notation, the two equations~\eqref{eq.2b} can be written as
\[
  (M_1,M_2)\left(\!\!\begin{array}{cc}C' & C''\\ 1 & 0\\ 0 & 1\end{array}\!\!\right) =
  (M_1,M_2)\left(\!\!\begin{array}{c}C\\ I_2\end{array}\!\!\right) = M_1C+M_2 = 0,
\]
where the conditions of Equation~\eqref{eq.1b} have been employed to
constitute the identity matrix~$I_2$. Now by the induction hypothesis
we may assume that $\det_{1\leq i,j\leq n-2}(a_{i,j}) = \det(M_1)$ equals
$b_{n-2}$, which by our general assumption is nonzero.  Thus the above
system determines~$C$ uniquely and we can write $C=-M_1^{-1}M_2$.

Finally, we obtain the missing part from the block matrix formula that
gives us the quotient $\det(M)/\det(M_1)$, as the determinant of the
$2\times 2$ matrix
\[
  (M_3,M_4)\left(\!\!\begin{array}{c}C\\ I_2\end{array}\!\!\right) =
  M_3C+M_4 = M_4-M_3M_1^{-1}M_2.
\]
This is exactly what is expressed in Equation~\eqref{eq.3b}, with
all matrix multiplications explicitly written out.

Even though it turned out during our research that there is no need to
apply the double-step method in the present context, we decided to
keep it, as we are sure that it will be useful for future use.  Let us
also remark that it is straight-forward to generalize this idea in
order to produce a triple-step method, etc. But since the identities
to be verified become more and more complicated, we don't believe that
these large-step methods are relevant in practice.

\section{Small Change, Big Impact}
\label{sec.35}

The next determinants we want to study appear as \emph{Conjecture~35}
and \emph{Conjecture~36} in~\cite{Krattenthaler05}.
Krattenthaler (private communication) describes the story of how they
were raised:
\begin{quote}
I wrote this article during a stay at the Mittag-Leffler Institut.
Guoce Xin was also there, as well as Alain Lascoux.  They followed the
progress on the article with interest.  So, it was Guoce Xin, who had
been looking at similar determinants at the time (in the course of his
work with Ira Gessel on his big article on determinants and path
counting~\cite{GesselXin06}), who told me what became Conjecture~35. I
made some experiments and discovered Conjecture~36. Alain Lascoux saw
all this, and he came up with Conjecture~37.
\end{quote}

Xin's observation was that a certain matrix, very similar to
the one of Section~\ref{sec.34}, has a determinant that factors
completely; the only change is the sign of the Kronecker
delta~$\delta_{i,j}$. But still, the evaluation is given as a case
distinction between even and odd dimensions of the matrix.

\begin{theorem}\label{thm.35}
Let~$\mu$ be an indeterminate and $n$ be a nonnegative integer. Then
the determinant
\begin{equation}\label{eq.det35}
  \det_{1\leq i,j\leq n}\left(-\delta_{i,j}+\binom{\mu+i+j-2}{j}\right)
\end{equation}
equals
\begin{align*}
  & (-1)^{n/2} \, 2^{n(n+2)/4} \,
     \frac{\left(\frac{\mu}{2}\right)_{n/2}}{\left(\frac{n}{2}\right)!} \,
     \Bigg(\prod_{i=0}^{(n-2)/2} \frac{i!^2}{(2i)!^2}\Bigg) \\
  & \quad\times\Bigg(\prod_{i=1}^{\lfloor n/4\rfloor}
      \Big({\textstyle \frac12(\mu+6i-1)}\Big)_{(n-4i+2)/2}^2 \,
      \Big({\textstyle \frac12(-\mu-3n+6i)}\Big)_{(n-4i)/2}^2\Bigg)
\end{align*}
if $n$ is even, and it equals
\begin{align*}
  & (-1)^{(n-1)/2} \, 2^{(n+3)(n+1)/4} \,
     \Big({\textstyle\frac12(\mu-1)}\Big)_{(n+1)/2} \, \Bigg(\prod_{i=0}^{(n-1)/2} \frac{i!(i+1)!}{(2i)!(2i+2)!}\Bigg)\\
  & \quad\times \Bigg(\prod_{i=1}^{\lfloor(n+1)/4\rfloor} \!\!
      \Big({\textstyle \frac12(\mu+6i-1)}\Big)_{(n-4i+1)/2}^2 \,
      \Big({\textstyle \frac12(-\mu-3n+6i-3)}\Big)_{(n-4i+3)/2}^2\Bigg)
\end{align*}
if $n$ is odd.
\end{theorem}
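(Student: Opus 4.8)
The plan is to apply Zeilberger's holonomic ansatz in essentially the same form as in the proof of Theorem~\ref{thm.34}, but this time with the full closed-form evaluation~$b_n$ at our disposal rather than merely a quotient. Since~$b_n$ is given by a different expression for even and odd~$n$, I would first unify the two cases: either by rewriting both formulas in a single hypergeometric-style expression (using floor functions, as is already done in the statement), or—more conveniently for the computations—by treating the two parities separately and producing two recurrences for $b_n/b_{n-1}$, one valid along even~$n$ and one along odd~$n$. In either case the goal is to guess a holonomic description of the normalized cofactors $c_{n,j}$ defined by the linear system of Section~\ref{sec.Zeil}, i.e.\ recurrences in~$n$ and~$j$ with polynomial coefficients in $\Q(\mu)$, and then verify the three identities~\eqref{eq.1}, \eqref{eq.2}, \eqref{eq.3}.

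Concretely, the steps are as follows. First, compute the exact rational values of $c_{n,j}$ for all $n$ up to some bound (say~$40$), giving a sufficiently large array over $\Q(\mu)$, and feed it to Kauers's \texttt{Guess} package to obtain candidate recurrences; as in~\eqref{eq.rec1} and~\eqref{eq.rec2} one expects a two-element Gr\"obner basis of the annihilating ideal, with polynomial coefficients. Second, analyze the leading coefficients to locate the ``singular'' cases—values of $(n,j)$ or of~$\mu$ where no recurrence is applicable from the chosen initial data—and, exactly as in Section~\ref{sec.34}, derive an auxiliary diagonal recurrence of the type~\eqref{eq.recdiag} that lets one compute the boundary values $c_{n,n}$, together with a restriction $\mu>\mu_0$ that is harmless because the determinant is polynomial in~$\mu$. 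Third, run the standard closure-property and creative-telescoping machinery (the first author's \texttt{HolonomicFunctions} package) to verify~\eqref{eq.1} from the diagonal recurrence and initial values, and \eqref{eq.2}, \eqref{eq.3} as definite-sum identities; the right-hand side of~\eqref{eq.3} is the guessed quotient $b_n/b_{n-1}$, which for this determinant I would split according to the parity of~$n$. Finally, one checks the base case and, since a first-order recurrence for $b_n/b_{n-1}$ together with the initial value pins down~$b_n$ uniquely, one concludes that the determinant equals the claimed closed form.

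The main obstacle I expect is twofold. The first is purely computational: the recurrences for $c_{n,j}$ here are likely to be substantially larger than~\eqref{eq.rec1}--\eqref{eq.rec2}—both in order and in the degrees of the polynomial coefficients—because the closed form involves nested Pochhammer products whose arguments jump with period~$4$ (note the $\lfloor n/4\rfloor$ in the statement), so the guessing step and the subsequent creative-telescoping certificates may push memory and time to their limits, and one may need to guess modulo several primes and reconstruct. The second, more structural difficulty is the period-four behaviour: the identity~\eqref{eq.3} will not hold with a single polynomial recurrence in~$n$ but rather requires separate treatment along each residue class, or equivalently a recurrence of order a multiple of~$4$; organizing the induction and the verification so that all four (or at least both parity) cases are covered uniformly, and making sure the diagonal recurrence~\eqref{eq.recdiag} and the $\mu$-restriction are compatible with this refined splitting, is where the real care is needed. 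As in Theorem~\ref{thm.34}, once the right recurrences are in hand the verification is automatic, so the challenge is entirely in obtaining and certifying that holonomic data; the detailed computations would be deferred to the electronic supplementary material~\cite{KoutschanThanatipanonda12a}.
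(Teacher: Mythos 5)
Your plan is precisely the direct attack that was tried first and found to be infeasible: for this determinant one can indeed guess the two-element Gr\"obner basis of recurrences for the normalized cofactors $c_{n,j}$ (so the guessing step you describe does go through), but the resulting operators are so large that the creative-telescoping certification of \eqref{eq.2a} and \eqref{eq.3a} is out of reach; the double-step variant of Section~\ref{sec.double} likewise only succeeds for concrete integer values of $\mu$, not for symbolic~$\mu$. Since this is a computer-algebra proof, ``the computations cannot be completed'' is not merely an inconvenience --- without a certificate there is no proof, so as it stands your argument has a genuine gap at the verification step, and no amount of modular guessing and reconstruction is reported to close it.

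The missing idea is a structural detour that shrinks the problem before the holonomic machinery is applied. Introduce the shifted determinants $b_n(I,J)$ (rows starting at $I$, columns at $J$) and use the Desnanot--Jacobi (Dodgson) identity
\begin{equation*}
  b_{n}(0,0)\,b_{n-2}(1,1) \;=\; b_{n-1}(0,0)\,b_{n-1}(1,1) \;-\; b_{n-1}(0,1)\,b_{n-1}(1,0),
\end{equation*}
together with two auxiliary facts, namely $b_{2n}(0,0)=-b_{2n-1}(1,1)$ (proved by a Laplace-expansion-along-the-first-row variant of the ansatz) and $b_{2n-1}(0,0)=0$ (proved by exhibiting a guessed null vector of the columns). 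These reduce the even and odd quotients $b_{2n}(1,1)/b_{2n-2}(1,1)$ and $b_{2n+1}(1,1)/b_{2n-1}(1,1)$ to products of quotients of the \emph{off-diagonal} determinants $b(0,1)$ and $b(1,0)$, and it is only these smaller determinants that are fed to the Section~\ref{sec.34} variant of Zeilberger's ansatz (yielding the closed forms $Q_1,\dots,Q_4$). Your concern about the period-four structure is also somewhat misplaced: no recurrence of order a multiple of four is needed, because the decomposition naturally produces quotients of determinants whose dimensions differ by two, and these are the ones that factor nicely.
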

\begin{proof}
We could try to solve this problem directly, either by Zeilberger's
original ansatz or in the way we did in Section~\ref{sec.34}. However,
this does not work in practice as the computations become too large
(in the second case we were even able to guess the recurrences for
$c_{n,j}$, but their size destroyed any hope to prove \eqref{eq.2a}
and~\eqref{eq.3a}). Next, we could give the double-step method a try,
which we described in Section~\ref{sec.double}. We succeeded to make
the proof for some concrete integer~$\mu$, but again, it seems
intractable for symbolic~$\mu$. Instead we make a small detour and
break this determinant into pieces in order to make the calculations
smaller. To obtain the desired result, we put these pieces together by
the Desnanot-Jacobi adjoint matrix theorem, also known as Dodgson's
rule; this formula gave rise to a celebrated algorithmic determinant
evaluation method as well~\cite{Zeilberger96,AmdeberhanZeilberger01},
but that approach is different from our usage of Dodgson's rule.  Let
us introduce the following notation
\[
  b_n(I,J) := b_n(I,J,\mu) := \det_{\genfrac{}{}{0pt}{}{I\leq i \leq n-1+I}{J \leq j \leq n-1+J}}
    \left(-\delta_{i,j}+\binom{\mu+i+j-2}{j}\right)
\]
so that our determinant~\eqref{eq.det35} is denoted by~$b_n(1,1)$.
In this notation the Desnanot-Jacobi identity is stated as follows:
\begin{equation}\label{eq.dj}
  b_{n}(0,0)b_{n-2}(1,1) = b_{n-1}(0,0)b_{n-1}(1,1) - b_{n-1}(0,1)b_{n-1}(1,0).
\end{equation}
After substituting $n\to 2n+2$ and $n\to 2n+1$ in~\eqref{eq.dj} and using
the fact that $b_{2n}(0,0)=-b_{2n-1}(1,1)$ (to be shown later) we have
\begin{align*}
  b_{2n+1}(1,1) & = \frac{b_{2n+1}(0,1)b_{2n+1}(1,0)}{b_{2n}(1,1)+b_{2n+1}(0,0)}\\
  b_{2n-1}(1,1) & = \frac{-b_{2n}(0,1)b_{2n}(1,0)}{b_{2n}(1,1)+b_{2n+1}(0,0)}
\end{align*}
from which the desired quotient can be obtained:
\begin{equation}\label{eq.quo1}
  \frac{b_{2n+1}(1,1)}{b_{2n-1}(1,1)} = 
    -\frac{b_{2n+1}(0,1)}{b_{2n}(0,1)}\cdot\frac{b_{2n+1}(1,0)}{b_{2n}(1,0)}.
\end{equation}
Similarly, we substitute $n\to 2n+1$ and $n\to 2n$ into~\eqref{eq.dj}
and use the fact that $b_{2n-1}(0,0)=0$ (again, to be shown later), to derive
the quotient of even determinants
\begin{equation}\label{eq.quo2}
  \frac{b_{2n}(1,1)}{b_{2n-2}(1,1)} = 
    -\frac{b_{2n}(0,1)}{b_{2n-1}(0,1)}\cdot\frac{b_{2n}(1,0)}{b_{2n-1}(1,0)}.
\end{equation}
We now use the first variation of Zeilberger's ansatz (see
Section~\ref{sec.34}) to derive recurrences for the quotients
$b_{2n+1}(0,1)/b_{2n}(0,1)$, etc.  which appear on the right-hand
sides of \eqref{eq.quo1} and~\eqref{eq.quo2}.  Since the
arguments closely follow the lines of the proof of
Theorem~\ref{thm.34}, we don't detail further this part and refer
to the electronic material~\cite{KoutschanThanatipanonda12a}.  
Although for our purposes it would suffice to work with these
recurrences, we succeed in solving them in closed form:
\begin{align}
Q_1(n) := \frac{b_{2n+1}(0,1)}{b_{2n}(0,1)} & = \label{eq.Q1}
  \frac{2\, \big(\frac{\mu}{2} +2n\big)_{\!n+1} \big(\mu+2n-1\big)_{\!n+1}}
       {\big(n+2\big)_{\!n+1} \big(\frac{\mu}{2}+n\big)_{\!n+1}},\\
Q_2(n) := \frac{b_{2n}(0,1)}{b_{2n-1}(0,1)} & = \label{eq.Q2}
  \frac{(\mu+2n-2) \big(\frac{\mu}{2}+2n-1\big)_{\!n-1} \big(\mu+2n+1\big)_{\!n-1}}
       {n\,\big(n\big)_{\!n-1} \big(\frac{\mu}{2}+n+1\big)_{\!n-1}},\\
Q_3(n) := \frac{b_{2n+1}(1,0)}{b_{2n}(1,0)} & = \label{eq.Q3}
  \frac{2\,\big(\frac{\mu}{2}+2n\big)_{\!n+1} \big(\mu+2n+1\big)_{\!n-1}}
       {\big(n+1\big)_{\!n} \big(\frac{\mu}{2}+n+1\big)_{\!n}},\\
Q_4(n) := \frac{b_{2n}(1,0)}{b_{2n-1}(1,0)} & = \label{eq.Q4}
  \frac{2\,\big(\frac{\mu}{2}+2n-1\big)_{\!n-1} \big(\mu+2n+1\big)_{\!n-1}}
       {\big(n\big)_{\!n-1} \big(\frac{\mu}{2}+n+1\big)_{\!n-1}}.
\end{align}
These quotients immediately give closed-form evaluations of the
corresponding determinants (see also Theorems~\ref{thm.b01}
and~\ref{thm.b10}).  It remains to justify the assumptions
$b_{2n}(0,0)=-b_{2n-1}(1,1)$ and $b_{2n-1}(0,0)=0$ that were used to
derive \eqref{eq.quo1} and~\eqref{eq.quo2}.

In order to evaluate the quotient $b_{2n}(0,0)/b_{2n-1}(1,1)$ we need
to modify the method presented in Section~\ref{sec.34}: We apply
Laplace expansion with respect to the first row of the matrix, instead
of the $n$-th row, and we normalize the auxiliary function~$c_{n,j}$
in such a way that $c_{n,0}=1$. If we come up with a recursive
description of some function~$c_{n,j}$ and are able to verify the
following identities, then we are done:
\begin{alignat}{3}
  c_{n,0} & = 1 && (n\geq 1),\tag{1c}\label{eq.1c}\\
  \sum_{j=0}^{2n-1}c_{n,j}a_{i,j} & = 0 && (0<i\leq 2n-1), \tag{2c}\label{eq.2c}\\
  \sum_{j=0}^{2n-1}c_{n,j}a_{0,j} & = \frac{b_{2n}(0,0)}{b_{2n-1}(1,1)} = -1 &\qquad& (n\geq 1). \tag{3c}\label{eq.3c}
\end{alignat}
As before, the details can be found in~\cite{KoutschanThanatipanonda12a}.

Last but not least we have to argue that $b_{2n-1}(0,0)$ vanishes.
Krattenthaler kindly pointed us to
\cite[Theorem~11]{CiucuEisenkoeblKrattenthalerZare01} (see also
\cite[Theorem~35]{Krattenthaler99}) which contains this statement. Anyway we
have produced an alternative, computerized proof: Actually it is very
simple, since we just have to come up with a guess for the
coefficients of a linear combination of the columns (or rows) that
gives 0, and then prove that our guess does the job. Hence we find a
recursive description of a function $c_{n,j}$ for $n\geq 1$ and $0\leq j\leq
2n-2$, such that
\[
  c_{n,0}\begin{pmatrix} a_{0,0} \\ \vdots \\ a_{2n-2,0} \end{pmatrix} + \dots + 
  c_{n,2n-2}\begin{pmatrix} a_{0,2n-2} \\ \vdots \\ a_{2n-2,2n-2} \end{pmatrix}
  = \begin{pmatrix} 0 \\ \vdots \\ 0 \end{pmatrix}
\]
and such that there is an index~$j$ for which $c_{n,j}\neq 0$.  For
guessing, we compute $c_{n,j}$ for all~$n$ up to some bound and
normalize them. Luckily the nullspace of the above system has always
dimension~$1$, otherwise it would be trickier to find a suitable
linear combination (however, in our problem this is no surprise since
we are finally aiming at proving that the minor $M_{0,0}$ of this
matrix evaluates to some nonzero expression).  So we are successful in
guessing the recurrences of $c_{n,j}$ and use them to prove
\begin{alignat*}{3}
  c_{n,2n-2} & = 1 &\qquad& (n\geq 1),\\
  \sum_{j=0}^{2n-2}c_{n,j}a_{i,j} & = 0 && (0\leq i\leq 2n-2).
\end{alignat*}
This concludes the proof of Theorem~\ref{thm.35}.
\end{proof}
Since the evaluations of the determinants $b_n(0,1)$ and $b_n(1,0)$
are interesting results in their own right, but are somehow hidden in
the proof of Theorem~\ref{thm.35}, we are going to state them
explicitly here.  
\begin{theorem}\label{thm.b01}
Let~$\mu$ be an indeterminate and $n$ be a nonnegative integer.  Let
$Q_1(n)$ and $Q_2(n)$ be defined as in~\eqref{eq.Q1}
and~\eqref{eq.Q2}, respectively. Then the determinant
\[
  b_n(0,1) = 
  \det_{1\leq i,j\leq n}\left(-\delta_{i-1,j}+\binom{\mu+i+j-3}{j}\right)
\]
equals 
\[
  \Bigg(\prod_{k=0}^{n/2-1} Q_1(k)\Bigg) \Bigg(\prod_{k=1}^{n/2} Q_2(k)\Bigg)
\]
if $n$ is even, and it equals
\[
  (\mu-1) \prod_{k=1}^{(n-1)/2} \Big(Q_1(k) Q_2(k)\Big)
\]
if $n$ is odd.
\end{theorem}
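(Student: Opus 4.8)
The plan is to deduce the closed form directly from the telescoping product of consecutive quotients, exactly as one recovers a sequence from its first differences. By definition, $b_n(0,1)$ is the $n\times n$ determinant displayed in the statement, and in the proof of Theorem~\ref{thm.35} we already established the quotient recurrences
\[
  Q_1(n) = \frac{b_{2n+1}(0,1)}{b_{2n}(0,1)},\qquad
  Q_2(n) = \frac{b_{2n}(0,1)}{b_{2n-1}(0,1)},
\]
together with their closed-form evaluations \eqref{eq.Q1} and~\eqref{eq.Q2}. First I would pin down the base case: a direct computation gives $b_0(0,1)=1$ (the empty determinant) and $b_1(0,1)=\binom{\mu-1}{1}=\mu-1$, which already matches the claimed value for $n=1$ (empty product equal to~$1$, times the prefactor $\mu-1$). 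One should check that $b_n(0,1)\neq 0$ for all $n$ in the relevant range, so that the quotients are well defined; this follows since each $Q_i(k)$ is a ratio of nonzero Pochhammer expressions (under the genericity/positivity assumption on~$\mu$ discussed in Section~\ref{sec.34}, and then extended to all $\mu\in\C$ by polynomiality).

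Next I would simply iterate. For even $n=2m$, writing the telescoping product
\[
  b_{2m}(0,1) = b_0(0,1)\prod_{k=1}^{m}\frac{b_{2k}(0,1)}{b_{2k-1}(0,1)}\cdot\frac{b_{2k-1}(0,1)}{b_{2k-2}(0,1)}
  = \prod_{k=1}^{m} Q_2(k)\,Q_1(k-1)
  = \Bigg(\prod_{k=0}^{m-1} Q_1(k)\Bigg)\Bigg(\prod_{k=1}^{m} Q_2(k)\Bigg),
\]
which is precisely the even-case formula with $m=n/2$. For odd $n=2m+1$,
\[
  b_{2m+1}(0,1) = b_1(0,1)\prod_{k=1}^{m}\frac{b_{2k+1}(0,1)}{b_{2k}(0,1)}\cdot\frac{b_{2k}(0,1)}{b_{2k-1}(0,1)}
  = (\mu-1)\prod_{k=1}^{m} Q_1(k)\,Q_2(k),
\]
which is the odd-case formula with $m=(n-1)/2$. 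So the theorem is a formal consequence of the two quotient identities proved earlier, plus the two initial values.

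The only genuine work is therefore the part that is not ``telescoping bookkeeping'': namely, verifying that the closed forms \eqref{eq.Q1} and~\eqref{eq.Q2} indeed hold. But this was already carried out in the proof of Theorem~\ref{thm.35} via the first variation of Zeilberger's ansatz (guessing and then proving recurrences for $b_{2n+1}(0,1)/b_{2n}(0,1)$ and $b_{2n}(0,1)/b_{2n-1}(0,1)$, then solving those recurrences in closed form), with details in the electronic material~\cite{KoutschanThanatipanonda12a}. Hence the ``hard part'' has effectively been discharged en route to Theorem~\ref{thm.35}, and nothing new is needed here beyond assembling the pieces and checking the base cases; I do not anticipate any obstacle. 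One small caveat worth a sentence in the writeup: one must confirm that the closed-form quotient expressions are valid not merely as recurrence solutions but for every admissible~$n$ (including small $n$ where Pochhammer symbols may have nonpositive length and must be read as empty products equal to~$1$); this is a routine check against the tabulated values of $b_n(0,1)$ for small~$n$.
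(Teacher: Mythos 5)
Your proposal is correct and follows essentially the same route as the paper: the authors establish the closed forms of the quotients $Q_1$ and $Q_2$ via their even-step variation of Zeilberger's ansatz in the proof of Theorem~\ref{thm.35} and note that these ``immediately give closed-form evaluations of the corresponding determinants,'' which is precisely your telescoping argument combined with the base cases $b_0(0,1)=1$ and $b_1(0,1)=\mu-1$. Your write-up merely makes explicit the bookkeeping that the paper leaves implicit (deferring the computational core to the supplementary material), and your checks of the base cases and of the nonvanishing of the intermediate determinants are sound.
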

\begin{proof}
It is analogous to the proof of Theorem~\ref{thm.34} and can be found
in~\cite{KoutschanThanatipanonda12a}.
\end{proof}
\begin{theorem}\label{thm.b10}
Let~$\mu$ be an indeterminate and $n$ be a nonnegative integer.  Let
$Q_3(n)$ and $Q_4(n)$ be defined as in~\eqref{eq.Q3}
and~\eqref{eq.Q4}, respectively. Then the determinant
\[
  b_n(1,0) = 
  \det_{1\leq i,j\leq n}\left(-\delta_{i,j-1}+\binom{\mu+i+j-3}{j-1}\right)
\]
equals
\[
  \Bigg(\prod_{k=0}^{n/2-1} Q_3(k)\Bigg) \Bigg(\prod_{k=1}^{n/2} Q_4(k)\Bigg)
\]
if $n$ is even, and it equals
\[
  \prod_{k=1}^{(n-1)/2} \Big(Q_3(k) Q_4(k)\Big)
\]
if $n$ is odd.
\end{theorem}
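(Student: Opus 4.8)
The plan is to obtain the closed form for $b_n(1,0)$ by telescoping the quotients $Q_3$ and $Q_4$, exactly in the spirit of the proof of Theorem~\ref{thm.34} and just as for $b_n(0,1)$ in Theorem~\ref{thm.b01}. By the definitions~\eqref{eq.Q3} and~\eqref{eq.Q4} one has $Q_3(n)=b_{2n+1}(1,0)/b_{2n}(1,0)$ and $Q_4(n)=b_{2n}(1,0)/b_{2n-1}(1,0)$, and both of these quotients have already been certified---and even solved in closed form---in the course of proving Theorem~\ref{thm.35}. Concretely, that certification runs the first variant of Zeilberger's holonomic ansatz from Section~\ref{sec.34} on the even-size submatrices of $b_n(1,0)$: one computes the normalized cofactors $c_{n,j}$ up to some dimension, guesses linear recurrences for them, supplements these with a ``diagonal'' recurrence of the type~\eqref{eq.recdiag} so that the entries $c_{n,2n}$ become computable despite the vanishing leading coefficient, restricts $\mu$ to a real half-line $>\mu_0$ to exclude spurious simultaneous vanishing of leading coefficients, and verifies the analogues of~\eqref{eq.1a}, \eqref{eq.2a}, \eqref{eq.3a} by holonomic closure properties and creative telescoping. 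Since the recurrences obtained for the two quotients are of first order, they can be solved explicitly, which is how~\eqref{eq.Q3} and~\eqref{eq.Q4} arise.

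Granting this, the evaluation of $b_n(1,0)$ is a one-line telescoping once the base cases are recorded: $b_0(1,0)=1$ is the empty determinant and $b_1(1,0)=\det\big(-\delta_{1,0}+\binom{\mu-1}{0}\big)=1$, so in particular $Q_3(0)=b_1(1,0)/b_0(1,0)=1$. For even $n=2m$, using $Q_3(k-1)=b_{2k-1}(1,0)/b_{2k-2}(1,0)$ and $Q_4(k)=b_{2k}(1,0)/b_{2k-1}(1,0)$, the product telescopes to
\[
  b_{2m}(1,0)=b_0(1,0)\prod_{k=1}^{m}Q_3(k-1)\,Q_4(k)
  =\Bigg(\prod_{k=0}^{m-1}Q_3(k)\Bigg)\Bigg(\prod_{k=1}^{m}Q_4(k)\Bigg),
\]
which is the asserted formula with $m=n/2$. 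For odd $n=2m+1$, using $Q_3(k)=b_{2k+1}(1,0)/b_{2k}(1,0)$ and $Q_4(k)=b_{2k}(1,0)/b_{2k-1}(1,0)$, the product again telescopes,
\[
  b_{2m+1}(1,0)=b_1(1,0)\prod_{k=1}^{m}Q_3(k)\,Q_4(k)=\prod_{k=1}^{m}Q_3(k)\,Q_4(k),
\]
which is the asserted formula with $m=(n-1)/2$. That all the intermediate determinants $b_j(1,0)$ are nonzero---so that these quotients are legitimate---follows by induction on $j$ from the base cases together with the fact that $Q_3(n)$ and $Q_4(n)$ are nonzero rational functions of $\mu$, being products of Pochhammer symbols.

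The telescoping just described is the routine part; the real work, and the only place I expect a genuine obstacle, is the certification of $Q_3$ and $Q_4$ behind~\eqref{eq.Q3} and~\eqref{eq.Q4}. The matrix underlying $b_n(1,0)$ carries the shifted Kronecker delta $\delta_{i,j-1}$, so the shape of its normalized cofactors differs from the one in Theorem~\ref{thm.34} and the guessing of the recurrences for $c_{n,j}$ must be redone from scratch; the saving grace is that these auxiliary determinants are much smaller than the full determinant~\eqref{eq.det35}, so the recurrence sizes stay within reach of the computer algebra machinery. The parameter bookkeeping (working over $\mu>\mu_0$ and afterwards extending to all $\mu\in\C$ because each determinant is a polynomial in $\mu$) and the diagonal-recurrence device for $c_{n,2n}$ carry over verbatim from Section~\ref{sec.34}. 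The explicit certificates are provided in the electronic supplementary material~\cite{KoutschanThanatipanonda12a}; with them in hand the two displayed identities complete the proof of Theorem~\ref{thm.b10}, and the completely parallel argument with $Q_1,Q_2$ in place of $Q_3,Q_4$ proves Theorem~\ref{thm.b01}, the extra factor $\mu-1$ in its odd case coming from the base value $b_1(0,1)=\mu-1$.
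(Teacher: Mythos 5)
Your proposal is correct and follows essentially the same route as the paper, whose own ``proof'' is a one-line deferral to the supplementary material: the quotients $Q_3$ and $Q_4$ are certified by the even-size variant of the holonomic ansatz exactly as in Theorem~\ref{thm.34}, and the closed form then follows by telescoping. Your explicit base cases $b_0(1,0)=b_1(1,0)=1$ (consistent with $Q_3(0)=1$) and the two telescoping computations correctly fill in the routine details the paper omits.
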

\begin{proof}
It is analogous to the proof of Theorem~\ref{thm.34} and can be found
in~\cite{KoutschanThanatipanonda12a}.
\end{proof}

By replacing $j$ by $j+1$ at the bottom of the binomial coefficient in
the entries of~\eqref{eq.det35}, we arrive at our last determinant; it
has been discovered by Krattenthaler and appears as
\emph{Conjecture 36} in his paper~\cite{Krattenthaler05} (note that
the formula there is erroneous, as one of the product quantors is
missing so that the corresponding factor slided into the previous
product). Also this problem has its own combinatorial interpretation
in terms of counting certain rhombus tilings.

\begin{theorem}\label{thm.36}
Let~$\mu$ be an indeterminate. For any odd nonnegative integer~$n$ there holds
\begin{align*}
  & \det_{1\leq i,j\leq n}\left(-\delta_{i,j}+\binom{\mu+i+j-2}{j+1}\right) = {}\\
  & \qquad (-1)^{(n-1)/2} \, 2^{(n-1)(n+5)/4} \, (\mu+1) \,
    \frac{\left(\frac12(\mu-2)\right)_{(n+1)/2}}{\left(\frac12(n+1)\right)!} \,
    \Bigg(\prod_{i=0}^{(n-1)/2} \frac{i!^2}{(2i)!^2}\Bigg)\\
  & \qquad\times\Bigg(\prod_{i=1}^{\lfloor(n+3)/4\rfloor} \Big({\textstyle\frac12(\mu+6i-3)}\Big)_{(n-4i+3)/2}^2\Bigg)\\
  & \qquad\times\Bigg(\prod_{i=1}^{\lfloor(n+1)/4\rfloor} \Big({\textstyle\frac12(-\mu-3n+6i-1)}\Big)_{(n-4i+1)/2}^2\Bigg).
\end{align*}
\end{theorem}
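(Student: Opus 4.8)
\emph{Strategy.} The determinant in Theorem~\ref{thm.36} is the ``$j\mapsto j+1$'' sibling of the one in Theorem~\ref{thm.35}, and the plan is to attack it along exactly the same lines: split it into smaller border determinants by the Desnanot--Jacobi (Dodgson) identity, evaluate those by the first variation of Zeilberger's holonomic ansatz introduced in Section~\ref{sec.34}, and reassemble. That a closed form is claimed only for \emph{odd}~$n$ is, just as for Andrews's determinant $D_1(n)$, the signature of an irreducible ``ugly'' factor shared by two consecutive even-dimensional minors; hence we shall not try to evaluate the even-dimensional determinant, but only the quotient of two consecutive odd-dimensional ones. In analogy with the proof of Theorem~\ref{thm.35} we put
\[
  \widetilde b_n(I,J) := \det_{\genfrac{}{}{0pt}{}{I\le i\le n-1+I}{J\le j\le n-1+J}}
    \left(-\delta_{i,j}+\binom{\mu+i+j-2}{j+1}\right),
\]
so that the determinant in question is $\widetilde b_n(1,1)$, with base value $\widetilde b_1(1,1)=\binom{\mu}{2}-1=\frac12(\mu+1)(\mu-2)$, which already matches the claimed formula at $n=1$. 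For this family the Desnanot--Jacobi identity reads
\[
  \widetilde b_n(0,0)\,\widetilde b_{n-2}(1,1)
  = \widetilde b_{n-1}(0,0)\,\widetilde b_{n-1}(1,1) - \widetilde b_{n-1}(0,1)\,\widetilde b_{n-1}(1,0).
\]

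\emph{Reduction and evaluation of the pieces.} Substituting $n\to 2n+2$ and $n\to 2n+1$ into this identity and invoking an auxiliary relation that expresses $\widetilde b_{2n}(0,0)$ through $\widetilde b_{2n-1}(1,1)$ by a simple explicit factor --- the analogue of $b_{2n}(0,0)=-b_{2n-1}(1,1)$ used in the proof of Theorem~\ref{thm.35}, whose precise form and sign are to be read off from small cases and then proved --- the mixed term $\widetilde b_{2n}(1,1)$ cancels, and, exactly as in the derivation of~\eqref{eq.quo1}, we are left with a relation of the shape
\[
  \frac{\widetilde b_{2n+1}(1,1)}{\widetilde b_{2n-1}(1,1)}
  = \pm\,\frac{\widetilde b_{2n+1}(0,1)}{\widetilde b_{2n}(0,1)}\cdot\frac{\widetilde b_{2n+1}(1,0)}{\widetilde b_{2n}(1,0)},
\]
possibly corrected by a polynomial factor coming from that auxiliary relation. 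Each quotient of border determinants on the right-hand side is then obtained by a suitably phased version of the first variation of Zeilberger's ansatz of Section~\ref{sec.34}: we compute the normalized cofactors $c_{n,j}$ of the relevant matrices up to dimension about~$30$, guess linear recurrences for them with Kauers's package \texttt{Guess}, certify the identities~\eqref{eq.1a}--\eqref{eq.3a} by holonomic closure properties and creative telescoping with \texttt{HolonomicFunctions}, and solve the resulting first-order recurrences in closed form as products of Pochhammer symbols, exactly as $Q_1,\dots,Q_4$ in~\eqref{eq.Q1}--\eqref{eq.Q4}; these incidentally also provide closed-form evaluations of the shifted border determinants $\widetilde b_n(0,1)$ and $\widetilde b_n(1,0)$, in the spirit of Theorems~\ref{thm.b01} and~\ref{thm.b10}. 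Telescoping the displayed quotient from $\widetilde b_1(1,1)$ produces a closed form for $\widetilde b_{2n+1}(1,1)$, and a finite, purely mechanical manipulation of Pochhammer quotients --- keeping track of the floor functions and the squared factors --- identifies it with the triple-product expression in the statement.

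\emph{Main obstacle.} As everywhere in this paper the bottleneck is computational rather than conceptual: \emph{proving} (not merely guessing) the cofactor recurrences for the border determinants in the presence of the free parameter~$\mu$ is the heaviest part of the computation, and one must, exactly as in Section~\ref{sec.34}, set aside the finitely many specializations of~$\mu$ at which the leading coefficients of the guessed recurrences degenerate, working over real $\mu>\mu_0$ and afterwards extending the result to all $\mu\in\C$ by polynomiality in~$\mu$. The second delicate point is to pin down the auxiliary $(0,0)$-minor relation with the correct sign and polynomial prefactor --- together with any companion vanishing statement needed to make the Desnanot--Jacobi recurrence collapse and to produce an \emph{odd}-$n$ (rather than even-$n$) formula; this is the step most likely to require a dedicated run of the first-row-Laplace variant of Section~\ref{sec.35} (cf.~\eqref{eq.1c}--\eqref{eq.3c}), and possibly a separately guessed-and-proved linear dependence among the columns, as in the final part of the proof of Theorem~\ref{thm.35}.
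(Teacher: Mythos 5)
Your strategy is workable in principle, but it is not the route the paper takes, and it is considerably heavier. The paper's proof is essentially a one-line reduction: shifting both indices by one turns the entry $-\delta_{i,j}+\binom{\mu+i+j-2}{j+1}$ into $-\delta_{i,j}+\binom{(\mu-2)+i+j-2}{j}$, so the determinant of Theorem~\ref{thm.36} in dimension $2n-1$ is literally $b_{2n-1}(2,2,\mu-2)$ in the notation already set up for Theorem~\ref{thm.35}. Consequently no new Desnanot--Jacobi scaffolding, no new border determinants $\widetilde b_n(0,1)$, $\widetilde b_n(1,0)$, and no new analogues of $Q_1,\dots,Q_4$ are needed: the single new computation is one run of the first-row-Laplace variant \eqref{eq.1c}--\eqref{eq.3c} showing that $q_n(\mu)=b_{2n}(1,1,\mu)/b_{2n-1}(2,2,\mu)$ satisfies $q_{n+1}(\mu)-q_n(\mu)=0$, hence equals $q_1(\mu)=-4/(\mu+3)$; combining this with the already-proved even case of Theorem~\ref{thm.35} gives the result. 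Your plan instead rebuilds all of the Theorem~\ref{thm.35} machinery from scratch for the shifted family. It would succeed, but be aware that the point you gloss over is in fact the crux: for your two specializations of Desnanot--Jacobi to collapse to a product of border-determinant quotients, the factor $\lambda_n$ in the auxiliary relation $\widetilde b_{2n}(0,0)=\lambda_n\,\widetilde b_{2n-1}(1,1)$ must be \emph{independent of~$n$}; otherwise the common term $\widetilde b_{2n+1}(0,0)-\lambda\,\widetilde b_{2n}(1,1)$ appears with different $\lambda$'s in numerator and denominator and the even-dimensional determinant $\widetilde b_{2n}(1,1)$ --- which does not factor nicely --- fails to cancel. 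This constancy does hold ($\lambda_n=-4/(\mu+1)$, which is exactly the paper's $q_n(\mu-2)$), but it is precisely the one genuinely new fact to be proved, not a cosmetic ``sign and prefactor to be read off from small cases.'' Finally, your anticipated ``companion vanishing statement'' has no analogue here: $\widetilde b_{2n-1}(0,0)=b_{2n-1}(1,1,\mu-2)$ is the odd case of Theorem~\ref{thm.35} and is nonzero; you are saved only because you need the odd-index quotient (the analogue of~\eqref{eq.quo1}) and not the even one.
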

\begin{proof}
Because of the similarity of this determinant with~\eqref{eq.det35}, we are
able to relate these two problems via shifting the starting points:
\[
  \begin{split}
  & \det_{1\leq i,j\leq 2n-1}\left(-\delta_{i,j}+\binom{\mu+i+j-2}{j+1}\right) = \\
  & \det_{2 \leq i,j\leq 2n}\left(-\delta_{i,j}+\binom{(\mu-2)+i+j-2}{j}\right).
  \end{split}
\]
By using the notation from above, the determinant of
Theorem~\ref{thm.36} is denoted by $b_{2n-1}(2,2,\mu-2)$. Analogously
to \eqref{eq.1c}--\eqref{eq.3c}, we apply a variation of Zeilberger's
approach to derive a recurrence for
\[
  q_n(\mu) = \frac{b_{2n}(1,1,\mu)}{b_{2n-1}(2,2,\mu)}.
\]
The result is $q_{n+1}(\mu)-q_n(\mu)=0$ which reveals that the
quotient~$q_n(\mu)$ is constant. Together with the initial value
$q_1(\mu)=-4/(\mu+3)$ and the fact that $b_{2n}(1,1,\mu)$ is already
known from Theorem~\ref{thm.35}, we get the desired result. Once
again, we refer to~\cite{KoutschanThanatipanonda12a} for the details
of the computations.
\end{proof}

\noindent
As mentioned above, Lascoux found that the more general determinant
\[
  \det_{1\leq i,j\leq n}\left(-\delta_{i,j+r-1} + \binom{\mu+i+j-2}{j+r-1}\right)
\]
factors completely for odd natural numbers $n$ and~$r$, and its
complicated evaluation, which was figured out by
Krattenthaler, appears as \emph{Conjecture~37}
in~\cite{Krattenthaler05}. We remark that the formula given there
holds only for $r\leq n$; otherwise the Kronecker delta does not show
up in the matrix and the evaluation is much simpler.  We cannot attack
this determinant directly with Zeilberger's ansatz since the matrix
entries do not evaluate to polynomials in~$\mu$ for concrete integers
$i$ and~$j$, as long as $r$ is kept symbolically.  Therefore the
guessing for $c_{n,j}$ will not work. A different strategy would
consist in finding some connection between the cases $r$ and $r+2$;
then induction on $r$ would provide a proof, using
Theorem~\ref{thm.35} as the base case $r=1$. Unfortunately we were not
able to achieve this goal.

\section{A Challenge Problem}
\label{sec.chall}
We want to conclude our article with a challenge problem for the next
generation of computer algebra tools. In Section~\ref{sec.34}, we have
only proven a statement about the quotient of two consecutive
determinants (Theorem~\ref{thm.34}). But so far nobody has come up
with a closed form for the determinant~$D_1(n)$. We now present a
conjectured closed form, which, however, we are unable to prove with
the methods described in the present paper. We have already remarked
in Section~\ref{sec.34} that the quotient $D_1(n)/D_1(n-1)$ most
probably is not holonomic; in that case Zeilberger's holonomic ansatz
and our variations of it are not applicable.

\begin{conjecture}\label{conj.34}
Let $\mu$ be an indeterminate and let the sequences $C(n)$, $F(n)$,
and $G(n)$ be defined as follows:
\begin{align*}
C(n) & = \frac{(-1)^n+3}{2}\prod_{i=1}^n\frac{\left\lfloor\frac{i}{2}\right\rfloor !}{i!},\\
E(n) & = (\mu+1)_n 
  \Bigg(\prod_{i=1}^{\left\lfloor\frac32\left\lfloor\frac12(n-1)\right\rfloor-2\right\rfloor} \!\!
    \Big(\mu+2i+6\Big)^{2\left\lfloor\frac13(i+2)\right\rfloor}\Bigg)\\
  & \qquad \times
  \Bigg(\prod_{i=1}^{\left\lfloor\frac32\left\lfloor\frac{n}{2}\right\rfloor-2\right\rfloor} \!\!
    \Big(\mu+2i+{\textstyle 2\left\lfloor\frac32\left\lfloor\frac{n}{2}+1\right\rfloor\right\rfloor}-1
    \Big)^{2\left\lfloor\frac12\left\lfloor\frac{n}{2}\right\rfloor-\frac13(i-1)\right\rfloor-1}\Bigg),\\
F_m(n) & = \Bigg(\prod_{i=1}^{\left\lfloor\frac14(n-1)\right\rfloor} \!\!\! (\mu+2i+n+m)^{1-2i-m}\Bigg) \\
  & \qquad \times
  \Bigg(\prod_{i=1}^{\left\lfloor\frac{n}{4}-1\right\rfloor} \! (\mu-2i+2n-2m+1)^{1-2i-m}\Bigg),
\end{align*}\begin{align*}
F(n) & = \begin{cases}  
    E(n) F_0(n), & \text{if}\ n\ \text{is even},\\
    \displaystyle E(n) F_1(n) \prod_{i=1}^{\frac12(n-5)} \!\! (\mu+2i+2n-1), & \text{if}\ n\ \text{is odd},
  \end{cases}\\
T(k) & = 55296k^6 + 41472(\mu-1)k^5 + 384(30\mu^2-66\mu+53)k^4\\
  & \qquad +96(\mu-1)(15\mu^2-42\mu+61)k^3\\
  & \qquad +4(19\mu^4-122\mu^3+419\mu^2-544\mu+72)k^2\\
  & \qquad +(\mu-1)(\mu^4-14\mu^3+101\mu^2-160\mu-84)k\\
  & \qquad +2(\mu-3)(\mu-2)(\mu-1)(\mu+1),\\
S_1(n) & = \sum_{k=1}^{n-1} {\textstyle
    \Big( 2^{6k} (\mu+8k-1) \left(\frac12\right)_{2k-1}^2 \left(\frac12(\mu+5)\right)_{2k-3} \left(\frac12(\mu+4k+2)\right)_{k-2} }\\
  & \qquad\quad \times {\textstyle 
    \left(\frac12(\mu+4k+2)\right)_{2n-2k-2} T(k) \Big)\Big/\Big( (2k)! \left(\frac12 (\mu+6k-3)\right)_{3k+4} \Big) },\\
S_2(n) & = \sum_{k=1}^{n-1} {\textstyle 
    \Big( 2^{6k} (\mu+8k+3) \left(\frac12\right)_{2k}^2 \left(\frac12(\mu+5)\right)_{2k-2} \left(\frac12(\mu+4k+4)\right)_{k-2} }\\
  & \qquad\quad \times {\textstyle
    \left(\frac12(\mu+4k+4)\right)_{2n-2k-2} T\!\left(k+\frac12\right) \!\Big) \Big/
    \Big( (2k+1)! \left(\frac12(\mu+6k+1)\right)_{3k+5} \Big) },\\
P_1(n) & = 2^{3n-1}\frac{\left(\frac12 (\mu+6n-3)\right)_{3n-2}}{\left(\frac12(\mu+5)\right)_{2n-3}}
  \left(\frac{\left(\frac12(\mu+2)\right)_{2n-2}}{(\mu+3)^2} + \frac{\mu(\mu-1)S_1(n)}{2^{13}}\right),\\
P_2(n) & = 2^{3n-1}\frac{\left(\frac12 (\mu+6n+1)\right)_{3n-1}}{\left(\frac12(\mu+5)\right)_{2n-2}}
  \left(\frac{(\mu+14) \left(\frac12(\mu+4)\right)_{2n-2}}{(\mu+7)(\mu+9)} + \frac{\mu(\mu-1)S_2(n)}{2^{9}}\right),\\
G(n) & = \begin{cases}
  P_1\!\left(\frac12(n+1)\right), & \text{if}\ n\ \text{is odd},\\
  P_2\!\left(\frac{n}{2}\right), & \text{if}\ n\ \text{is even}.
  \end{cases}
\end{align*}
Then for every positive integer~$n$ we have
\[
  \det_{1\leq i,j\leq n}\left(\delta_{i,j} + \binom{\mu+i+j-2}{j}\right) = 
  \textstyle C(n) F(n) G\!\left(\left\lfloor\frac12(n+1)\right\rfloor\right).
\]
\end{conjecture}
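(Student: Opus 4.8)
Our plan for attacking Conjecture~\ref{conj.34} would begin by reducing to a single parity class. Since the argument $\lfloor\tfrac12(n+1)\rfloor$ of $G$ takes the same value at $n=2m-1$ and at $n=2m$, the factor $G$ cancels in the conjectured quotient $D_1(2m)/D_1(2m-1)$, which then collapses to a ratio involving only $C$ and $F$; one checks that this ratio equals the right-hand side of Theorem~\ref{thm.34}, so the conjecture is consistent with---indeed subsumes---that theorem. It therefore suffices to establish the closed form for one parity, say for odd $n=2m-1$; the even case $n=2m$ then follows from Theorem~\ref{thm.34}. Note that the odd case still bifurcates according to $n\bmod 4$, through the two branches $P_1$ and $P_2$ of $G$.

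The conjectured right-hand side factors into a \emph{hyperfactorial-type} part, $C(n)F(n)$---carrying the superfactorial-like products of $C$ and the products appearing in $E$ and $F_m$ of linear forms raised to powers that grow with the index---and a \emph{holonomic} part, $G(\lfloor\tfrac12(n+1)\rfloor)$, built from Pochhammer products and the proper-hypergeometric sums $S_1$ and $S_2$. Because of the first kind of factor the quotient $D_1(n)/D_1(n-1)$ is almost surely not holonomic, so neither Zeilberger's ansatz nor the variants of Sections~\ref{sec.34} and~\ref{sec.double} apply to $D_1(n)$ directly. Instead we would imitate the proof of Theorem~\ref{thm.35}: apply the Desnanot--Jacobi (Dodgson) identity to the matrix $(\delta_{i,j}+\binom{\mu+i+j-2}{j})_{1\le i,j\le n}$ so as to relate $D_1(n)$, $D_1(n-1)$, $D_1(n-2)$ and the once-shifted $(n-1)$-minors of the matrix; then exploit parity-dependent collapses analogous to those used in Section~\ref{sec.35} to turn the resulting coupled relations into a genuine double-step recursion; and finally observe that the once-shifted determinants, living one shift away from the Kronecker delta, should factor far more mildly---with holonomic normalized cofactors---so that the first variant of Zeilberger's ansatz (as in Theorems~\ref{thm.b01} and~\ref{thm.b10}) yields explicit closed forms for them. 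Substituting these back would produce an explicit, if unwieldy, recursion for $D_1(n)$.

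It would then remain to verify that $C(n)F(n)G(\lfloor\tfrac12(n+1)\rfloor)$ satisfies this recursion, together with the first few values $D_1(1)=\mu+1$, $D_1(2)$, $D_1(3)$, $D_1(4)$. The hyperfactorial-type factors obey elementary shift relations, so dividing them out should reduce the verification to an identity among the $P_i$, hence ultimately to a summation identity for $S_1$ and $S_2$ involving the sextic kernel $T$. Although $S_1(n)$ and $S_2(n)$ are (conjecturally) not holonomic in $n$ once placed inside the full formula, their summands are proper hypergeometric in the summation index $k$, so once the $n$-dependence has been separated the identity should succumb to creative telescoping in $k$.

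The hard part will be the two ends of this chain. At one end it is not clear that the once-shifted determinants are genuinely benign: they may carry non-holonomic factors of their own, in which case the condensation route runs into the same wall; and even if they are benign, combining the several shifted determinants that Dodgson's rule brings in, with symbolic $\mu$, may simply be too large to handle in practice, just as happened with the direct approaches. At the other end, and more seriously, the conjectured closed form mixes holonomic and hyperfactorial-type ingredients, and no current algorithmic framework certifies identities in such a mixed class. A successful proof will therefore need either a substantial extension of holonomic-closure machinery to ``superfactorial-holonomic'' expressions, or a genuinely new structural insight---for instance a Lindstr\"{o}m--Gessel--Viennot-type expansion of $D_1(n)=\det(I+B_n)$, with $B_n=(\binom{\mu+i+j-2}{j})_{1\le i,j\le n}$, as a sum over its product-form principal minors, in which the otherwise mysterious floor functions and exponents in $C$, $E$, and $F_m$ would acquire a transparent combinatorial meaning. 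This is exactly the kind of advance the challenge is meant to provoke.
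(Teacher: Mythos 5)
This statement is not proved in the paper at all: Conjecture~\ref{conj.34} is posed in Section~\ref{sec.chall} explicitly as a ``challenge problem,'' and the authors state that they are unable to establish it with the methods of the paper, precisely because the quotient $D_1(n)/D_1(n-1)$ is most probably not holonomic. So there is no proof to compare against, and your text, as you candidly concede, is a research plan rather than a proof. Parts of the plan are sound: the reduction to a single parity is legitimate, since Theorem~\ref{thm.34} supplies $D_1(2n)/D_1(2n-1)$ and the consistency of the conjectured closed form with that quotient is a finite (if tedious) manipulation of the product formulas for $C$ and $F$; and your diagnosis of why the ansatz of Sections~\ref{sec.34} and~\ref{sec.double} cannot apply directly agrees with the paper's own remarks.

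The two gaps are the ones you half-admit, and both are essential. First, the Desnanot--Jacobi route succeeded for Theorem~\ref{thm.35} only because of two special features of that matrix: the shifted determinants $b_n(0,1)$ and $b_n(1,0)$ factor completely into linear factors (Theorems~\ref{thm.b01} and~\ref{thm.b10}), and the degenerations $b_{2n-1}(0,0)=0$ and $b_{2n}(0,0)=-b_{2n-1}(1,1)$ collapse the three-term condensation identity into clean two-term quotients. For the $+\delta_{i,j}$ matrix there is no evidence of either feature, and the presence of the mostly irreducible, degree-growing polynomials $G(n)$ in $D_1(n)$ itself makes it plausible that the shifted minors carry equally unpleasant factors, in which case condensation merely relocates the difficulty. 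Second, and decisively, even granting an explicit recursion for $D_1(n)$, the final verification is an identity in a mixed class: the sums $S_1(n)$ and $S_2(n)$ are in fact holonomic in $n$ (their summands are proper hypergeometric in $n$ and $k$ jointly), so $G$ is the tame part; the obstruction is the hyperfactorial factor $C(n)F(n)$, whose shift quotients are themselves growing products rather than rational functions, so that after ``dividing it out'' one is left with a functional equation for the $P_i$ with non-rational coefficients, outside the reach of creative telescoping or any current certification algorithm. Until one of these two ends is actually closed, the statement remains, exactly as in the paper, a conjecture.
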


Let us add a few remarks on our conjectured closed form.  The elements
of the sequence~$C(n)$ are rational numbers of the form $1/k$ where
$k$ is an integer. The sequences $F(n)$ and $G(n)$ consist of monic
polynomials in~$\mu$ with integer coefficients.  The $F(n)$ factor
completely into linear factors of the form $(\mu+k)$ where $k\in\N$,
and thus have positive coefficients.  The $G(n)$ have positive
coefficients as well, but turn out to be mostly irreducible; the only
counterexample we found is
$G(4)=(\mu+34)(\mu^3+47\mu^2+954\mu+5928)$.  They correspond to the
``ugly factors'' mentioned in Section~\ref{sec.34}. For the
convenience of the reader, we provide the Mathematica code for all
quantities introduced in Conjecture~\ref{conj.34} in the supplementary
material~\cite{KoutschanThanatipanonda12a}.

In order to come up with this complicated conjecture, we computed the
determinants $D_1(n)$ for $1\leq n\leq 295$ which gave us the first
$148$ polynomials of the sequence~$G(n)$. These data enabled us to
guess recurrences for the subsequences~$P_1(n)$ and~$P_2(n)$; the
recurrences by the way are used in~\cite{KoutschanThanatipanonda12a}
to provide a fast procedure for computing~$D_1(n)$. The Maple package
\texttt{LREtools} was able to find ``closed form'' solutions which,
after lots of automatic and manual simplifications, became the
formulae for $T(k)$, $S_i(n)$, and $P_i(n)$.

\subsection*{Acknowledgments}
We would like to thank Christian Krattenthaler and Peter Paule for
their many valuable comments, Manuel Kauers for his support concerning
the guessing software, Doron Zeilberger for his encouragement to
tackle these conjectures, and the two anonymous referees for their
diligent work.

\bibliographystyle{plain}

\end{document}